\newtheorem{theorem}{Theorem}
\newtheorem{mydef}{Definition}
\title{Enabling Privacy-Preserving GWAS in Heterogenous Human Populations}
\author{Sean Simmons\,$^{1,2,3}$, Cenk Sahinalp\,$^{3,4}$, and\\  Bonnie Berger\,$^{1,2}$\footnote{to whom correspondence should be addressed: bab@mit.edu}}
\date{}
\begin{document}

\maketitle

{\noindent $^{1}$Department of Mathematics, $^{2}$Computer Science and Artificial Intelligence Laboratory, Massachusetts Institute of Technology, Cambridge, MA\\
$^{3}$ School of Computing Science, Simon Fraser University, Burnaby, BC, Canada and\\
$^{4}$ School of Informatics and Computing, Indiana University, Bloomington, IN }

\newpage

\section{Abstract}

The projected increase of genotyping in the clinic and the rise of large genomic databases has led to the possibility of using patient medical data to perform genome-wide association studies (GWAS) on a larger scale and at a lower cost than ever before. Due to privacy concerns, however, access to this data is limited to a few trusted individuals, greatly reducing its impact on biomedical research. Privacy-preserving methods have been suggested as a way of allowing more people access to this precious data while protecting patients. In particular, there has been growing interest in applying the concept of differential privacy to GWAS results. Unfortunately, previous approaches for performing differentially private GWAS are based on rather simple statistics that have some major limitations--in particular, they do not correct for population stratification, a major issue when dealing with the genetically diverse populations present in modern GWAS. To address this concern we introduce a novel computational framework for performing GWAS that tailors ideas from differential privacy to protect private phenotype information, while at the same time correcting for population stratification. This framework allows us to produce privacy-preserving GWAS results based on two of the most commonly used GWAS statistics: EIGENSTRAT and linear mixed model (LMM) based statistics. We test our differentially private statistics, PrivSTRAT and PrivLMM, on both simulated and real GWAS datasets and find that they are able to protect privacy while returning meaningful GWAS results.

\section{Introduction}

With the projected increase of genotyping in the clinic and the rise of large genomic databases, there has been increasing interest in using patient data to perform genome-wide association studies (GWAS) \cite{I2B2,STRIDE}. The idea is to allow doctors and researchers to query patient electronic health records (EHR) to see which diseases are associated with which genomic alterations, avoiding the high costs required to recruit and genotype patients for a standard GWAS. Using this valuable data, however, leads to major privacy concerns for patients \cite{I2B2priv}. These privacy concerns have led to tight regulations over who can use this patient data--often it is limited to individuals who have gone through a time consuming and burdensome application process. Various approaches have been suggested for overcoming this major bottleneck in biomedical research. In particular, there has been interest in using a technique known as differential privacy \cite{CRr2013} to allow researchers access to this genomic data \cite{iDash,UTDP,CSAc2013,EpDP,MOTIF,blind,privgene} while preserving privacy.

Privacy concerns are not the only difficulty facing modern GWAS. GWAS aim to find biologically meaningful associations between common alleles in the population and disease status. This task, however, is complicated by systematic differences between different human populations \cite{herit}. It is often the case that biologically meaningful mutations are inherited jointly with mutations that have no such meaning, leading to false GWAS hits. A classic example of this phenomenon is given by the lactase gene. This gene is responsible for the ability to digest lactose (such as in milk), and is more common in those of Northern European ancestry than those of East Asian ancestry. People from Northern Europe are also, on average, taller than those from East Asia. This would lead a naive statistical method to erroneously suggest that the lactase gene is related to height. Such confounding effects are a major problem that can render the results of a GWAS (particularly one with large sample size) nearly nonsensical \cite{popStrat}. In order to avoid this common problem, known as population stratification, various methods have been employed (EIGENSTRAT \cite{EigenStrat}, linear mixed models (LMMs) \cite{herit}, genomic control (GC) \cite{GC}, etc.). 

In this work, we introduce the first method that jointly addresses the population stratification and privacy issues that arise when using patient data to answer GWAS queries.

\subsection{Our Contribution}

Previous work on differentially private GWAS have completely ignored the problem of population stratification, greatly limiting its applicability in the real world \cite{PoRu}. To help remedy this deficiency, we focus on producing GWAS results that can handle population stratification while still preserving private phenotype information (e.g., disease status). In particular, we develop a framework that can turn commonly used GWAS statistics (such as LMM based statistics and EIGENSTRAT) into tools for performing privacy-preserving GWAS. We will demonstrate this approach on two such statistics, EIGENSTRAT \cite{EigenStrat} and LMM based statistics \cite{herit}. Our methods, denoted PrivSTRAT and PrivLMM respectivelly, use a modified form of differential privacy to protect private phenotype information (disease status) from being leaked while returning highly associated SNPs. 

In particular, our new privacy framework allows us to repurpose three previous differentially private methods for picking high scoring SNPs to the EIGENSTRAT and LMM settings. We develop new algorithms that make these methods tractable (a limitation of some of the most promising differentially private GWAS methods proposed previously \cite{CSAc2013,iDashneigh}). We compare these methods on real and synthetic data, showing that one method, referred to as the distance method, greatly outperforms the other two in terms of accuracy. Importantly, ours is the first method able to correct for population stratification while preserving privacy in GWAS results. This opens up the possibility of applying a differentially private framework to large, genetically diverse groups of patients (such as those present in EHR!). 

\subsection{Previous Work}

GWAS aim to determine which common single nucleotide polymorphisms (SNPs) in the population are associated with a given disease. Numerous techniques (including genomic control \cite{GC}, EIGENSTRAT \cite{EigenStrat}, and linear mixed models (LMM) \cite{herit}) have been suggested to deal with population stratification in GWAS. In recent years, there has been a growing interest in using LMMs for this task, thanks to improved algorithms \cite{EMMAX,FASTLMM,GEORGE,PoRu,multLMM}. Even still, EIGENSTRAT remains a common approach for dealing with population stratification in practice.

Interest in privacy-preserving genomic analysis is a bit more recent \cite{YAn2014,BKCe2013,mask,MADEYh2013,relat,ours,JPS}. In particular, numerous works \cite{HOMER,TKe2010,HEDNx2012,XBYYHXg2011,SGMEn2009} have shown that GWAS statistics can leak private information about participants. Differential privacy \cite{CRr2013} (see below) has been suggested as a possible solution to the privacy conundrum \cite{iDash,iDashneigh,UTDP,CSAc2013,EpDP,MOTIF,blind,privgene,JPS}. There has even been a competition, hosted by iDASH, to help come up with better methods for performing differentially private GWAS \cite{iDash}. Note that, although much of this research has been encouraging, there is still a long way to go \cite{pharmaDPWar}.

\section{Definitions and Notation}

\subsection{GWAS Revisited}

The aim of a genome-wide association study (GWAS) is to link SNPs in a study cohort to a disease of interest. This is done by taking a large cohort of individuals, genotyping them at common SNPs, and, for each SNP, performing a statistical test to see if that SNP is associated with the disease in question. Note that, as with most work on GWAS, we assume each SNP has exactly two alleles: a minor allele and a major allele.

Formally, we have a group of $n$ individuals genotyped at $m$ SNPs. Let $D$ be an $n$ by $m$ genotype matrix, where the $i$th entry in the $j$th row of $D$ is equal to the number of times the minor allele occurs in the $j$th individual at the $i$th SNP (for autosomal SNPs this number is in the set $\{0,1,2\}$). Details on dealing with missing genotypes are provided in the Appendix. Let $X$ be the $n$ by $m$ matrix obtained by mean centering and variance normalizing each column of the genotype matrix $D$. Let $x_i$ be the column of $X$ corresponding to SNP $i$. Similarly, let $y=(y_1,\ldots,y_n)\in\{0,1\}^n$ be a vector of phenotypes, where $y_j=1$ if the $j$th individual has the disease, $y_j=0$ otherwise. 

Given $X$ and $y$, we would like to figure out which SNPs are associated with the disease in question. Naively, one could use a simple statistical test to figure this out (allelic test statistic, pearson test, logistic regression, linear regression, etc.). These statistics, however, ignore the effects of population stratification and lead to many false positives. Luckily, there have been various methods created to overcome this issue. Here we will mainly focus on one, EIGENSTRAT \cite{EigenStrat}, briefly touching on LMM based association \cite{EMMAX}.

\subsection{EIGENSTRAT Revisited}

One of the most popular methods for overcoming population stratification is known as EIGENSTRAT. This method is based off the observation that the top few principle components (which is to say the top few eigenvectors of the genetic covariance matrix) of the genotype matrix encode information about population stratification.

Formally, the method applies an eigendecomposition to the $n$ by $n$ covariance matrix $XX^T$. EIGENSTRAT works by forming two new vectors, $y^*$ and $x_i^*$, where $y^*$ (respectively $x_i^*$) is given by mean centering $y$ (respectively $x_i$) and projecting the result onto the vector space orthogonal to the top $k$ eigenvectors of $XX^T$ ($k$ is a user defined parameter; we set $k=5$). Intuitively, this procedure for producing $y^*$ and $x_i^*$ can be thought of as removing the effects of population stratification. Having removed the population stratification, all that remains is to test if $y^*$ and $x_i^*$ are correlated. This is done using a $\chi^2$-distributed statistic:

\[\chi^2_{i}=\frac{(n-k-1)(x_i^*\cdot y^*)^2}{|x_i^*|^2|y^*|^2}\]

\section{Methods}

\subsection{Differential Privacy and Private Phenotypes}

Differential privacy \cite{CRr2013} is an approach to privacy introduced by the cryptographic community. In a nutshell, it promises that that a given statistic calculated on one dataset behaves like the same statistic calculated on any dataset that differs in exactly one individual. In our case, since we are focusing on protecting phenotype data, we use a slightly modified definition:

\begin{mydef}
Let $F$ be a random function that takes in a $n$ by $m$ genotype matrix, $D$, and an $n$ dimensional phenotype vector, $y$, and outputs $F(D,y)$, where the output is in some set $\Omega$. We say that $F$ is $\epsilon$-phenotypic differentially private for some privacy parameter $\epsilon>0$ if, for all genotype matrices $D$, all phenotype vectors $y,y'\in \{0,1\}^n$ such that $y$ and $y'$ differ in exactly one coordinate, and for all sets $S\subset \Omega$, we have that
\[P(F(D,y)\in S)\leq exp(\epsilon)P(F(D,y')\in S)\] 
\end{mydef}

This differs from the usual definition of differential privacy since we are assuming the genotype matrix $D$ is fixed. Intuitively, our definition says that the result returned by $F$ when a given individual has the disease is statistically indistinguishable from the result returned when they do not have the disease (which is to say, for any $S\subset \Omega$, testing if $F(D,y)\in S$ reveals negligible private phenotype information). This indistinguishability ensures that $F$ gives away negligible information about the private phenotype $y$. 

The parameter $\epsilon$ is a privacy parameter: the closer to $0$ it is the more privacy is ensured, while the larger it is the weaker the privacy guarantee. This means we would like to set $\epsilon$ as small as possible, but unfortunately this comes at the cost of having less accurate outputs \cite{CRr2013}. 

Note that this is a slightly weaker definition of privacy than previous works--it does not guarantee that information about whether or not someone participated in our study is hidden. That said, when dealing with EHR, knowing that someone participated is equivalent to knowing they have their genotype on record at the hospital, a fact that is unlikely to be private.

\subsection{PrivSTRAT: Privacy-Preserving EIGENSTRAT}

The differentially private GWAS literature has largely focused on three tasks: picking highly associated SNPs, estimating association statistics, and estimating the number of significantly associated SNPs in a study. Due to space constraints we only consider the first problem here, though our framework can easily accomplish the other tasks as well (see the Appendix for details). 

In order to pick high scoring SNPs in a privacy-preserving manner we modify three previous methods (a noise based one, a score based one, and a distance based one \cite{CSAc2013,UTDP}) to EIGENSTRAT. 

Our task is to return the top $m_{ret}$ scoring SNPs for some user defined parameter $m_{ret}$ while achieving $\epsilon$-phenotypic differential privacy--that is to say we want to return the locations of the $m_{ret}$ SNPs with largest $\chi_i^2$ values (Note that this is a slightly different setup than in standard GWAS, where $m_{ret}$ is not known ahead of time. A discussion of this point is provided in the Appendix.). In order to do this, note that, if we let $\mu_{i}=\frac{x_i^*}{|x_i|}$, then: 

\[\chi_{i}^2=\frac{(n-k-1)(\mu_{i}\cdot y)^2}{|y^*|^2}\]

Since $|y^*|$ does not change from SNP to SNP, we see that picking the top $m_{ret}$ SNPs using EIGENSTRAT is equivalent to picking the $m_{ret}$ SNPs with largest $|\mu_i\cdot y|$ values. In order to do this in a privacy-preserving way let

\[\Delta=\max_{j\in\{1,\ldots,n\}}\max_{S\subset \{1,\ldots,m\},|S|=m_{ret}}\sum_{i\in S}|\mu_{ij}|\]

Our modified version of the noise based method for picking high scoring SNPs \cite{CSAc2013} works by calculating, for each $i$, $s_i=|\mu_iy|+Lap(0,\frac{2\Delta}{\epsilon})$, where
\[Lap(0,\lambda)\propto exp(-\frac{|x|}{\lambda})\]
This method then returns the $m_{ret}$ SNPs with the largest value of $s_i$. 

Similarly, our modified score based method \cite{CSAc2013} works by picking $m_{ret}$ SNPs without repetition, where the probability of picking the $i$th SNP is proportional to $exp(\frac{\epsilon|\mu_iy|}{2\Delta})$. Both the noise and score method are $\epsilon$-phenotypic differentially private (proofs in the Appendix).

The final approach is known as the distance based method \cite{UTDP}. This works as follows: the user chooses a threshold $c$. The $i$th SNP is considered significant if $|\mu_iy|>c$, not significant otherwise (for example, $c$ might correspond to a p-value of $.05$ or $10^{-8}$). The neighbor distance for the $i$th SNP, denoted $b_i$, is the minimum number of individuals whose phenotypes need to be changed to change SNP $i$ from significant to not or vice versa. Formally:

\[b_i=b_i(c)=\min_{y'\in[0,1]^n, c=|\mu_i\cdot y'|}|y-y'|_0\]

where $|v|_0$ denotes the number of nonzero entries in the vector $v$. Note that $b_i=\min\{d_i(c),d_i(-c)\}$, where $d_i(c)=\displaystyle\min_{y'\in[0,1]^n, c=\mu_i\cdot y'}|y-y'|_0$.

In order to use this neighbor distance to pick high scoring SNPs, we first have to let $d^*_i=b_i$ for significant SNPs and $d^*_i=1-b_i$ for all other SNPs. The distance based method picks $m_{ret}$ SNPs without repetition, where the probability of picking the $i$th SNP is proportional to $exp(\epsilon\frac{d^*_i}{2})$. Previous work \cite{UTDP} implies that this mechanism is indeed $\epsilon$-phenotypic differentially private. The difficult part is calculating $d_i(c)$. Our main algorithmic achievement is to show that this can be done using Algorithm \ref{Samp2}.

Taken together, these methods for picking high scoring SNPs while preserving privacy are called PrivSTRAT.

\begin{algorithm}
\caption{Calculates the neighbor distance}\label{Samp2}
\label{alg1}
\begin{algorithmic}
\Require $y,\mu_i,c$
\Ensure Returns the neighbor distance, $d_i(c)$.

\State Let $\hat{u}_j=\max(\mu_{ij}(1-y_j),\mu_{ij}(0-y_{j}) )$
\State Let $\hat{l}_j=\min(\mu_{ij}(1-y_j),\mu_{ij}(0-y_{j}) )$

\State Let $i_1,\cdots,i_n$ be a permutation on $1,\ldots,n$ such that $\hat{u}_{i_1}\geq \cdots\geq \hat{u}_{i_n}$. Let $u_j=\hat{u}_{i_j}$ for all $j$.

\State Let $j_1,\cdots,j_n$ be a permutation on $1,\ldots,n$ such that $\hat{l}_{j_1}\leq \cdots\leq \hat{l}_{j_n}$. Let $l_k=\hat{l}_{j_k}$ for all $k$.

\State Let $U_k=\sum_{j=1}^{k}u_j$ and $L_k=\sum_{j=1}^{k}l_j$, $k=1,\cdots,n$.

\State Return $k$ such that $c\in [L_{k+1},L_{k})\cup (U_{k},U_{k+1}]$

\end{algorithmic}
\end{algorithm}

\subsection{PrivLMM: Privacy-Preserving LMM Association}

Note that the above framework can be applied to other GWAS statistics besides EIGENSTRAT. In particular, it can be applied to linear mixed models (LMM) \cite{herit}. LMMs rely on the null model given by $y=X\beta+\epsilon$, where $\epsilon\propto N(0,\sigma_e^2\textbf{I}_n)$ and $\beta\propto N(0,\frac{\sigma_g^2}{m}\textbf{I}_m)$ for some unknown parameters $\sigma_e$ and $\sigma_g$ (where $\textbf{I}_n$ is the $n$ by $n$ identity matrix). 

Here we consider a slight modification of the LMM based approach used in EMMAX \cite{EMMAX}. This approach uses maximum likelihood (ML) to estimate $\sigma_e$ and $\sigma_g$. We can then apply the Wald test to see if a given SNP is associated with our disease phenotype. More specially, if we let $K=\sigma_e^2\textbf{I}_n+\frac{\sigma_g^2}{m}XX^T$, then we get a $\chi^2$ distributed statistic 

\[\chi_{i,LMM}^2=\frac{(x_i^TK^{-1}(\textbf{I}_n-\frac{1}{n}\textbf{1}_n)y)^2}{x_i^TK^{-1}x_i}\]

where $\textbf{1}_n$ is the $n$ by $n$ matrix of all ones. As was the case with EIGENSTRAT, it is worth noting that, if $\mu_{i,LMM}=\mu_{i,LMM}(\sigma_e^2,\sigma_g^2)=\frac{x_i^TK^{-1}(\textbf{I}_n-\frac{1}{n}\textbf{1}_n)}{\sqrt{x_i^TK^{-1}x_i}}$, then $\chi_{i,LMM}^2=(\mu_{i,LMM}\cdot y)^2$. This implies that high scoring SNPs correspond to SNPs with large values of $|\mu_{i,LMM}\cdot y|$.

This allows us to apply the framework we used for PrivSTRAT to this LMM statistic, giving us a method, denoted PrivLMM, that is phenotypically differentially private. The one added complication is that we need to be able to calculate $\sigma_e$ and $\sigma_g$ in a privacy-preserving way, but this is easily done using the sample-and-aggregate framework \cite{LMMDP} (see the Appendix). 


\section{Results}

We show that, on a real GWAS dataset with reasonable choices of $\epsilon$ (around $1.0$ or $2.0$) and $m_{ret}$ ($m_{ret}\in\{3,5\}$), both PrivSTRAT and PrivLMM have near perfect accuracy when using the distance method developed above. In order to test our methods, we implemented both of them in python using the pysnptools library \cite{PYSNPTOOLS}. 

\subsection{Data}

We test PrivSTRAT and PrivLMM on a Rheumatoid Arthritis (RA) dataset, NARAC-1, from Plenge et al. \cite{Plenge}. After quality control it contained 893 cases and 1243 controls, and a total of 67623 SNPs to be considered. Since this dataset has fairly little population stratification we also tried PrivSTRAT on a simulated dataset with two subpopulations. This dataset and the code to produce it (based off Plink tools \cite{PLINK}) are available online.

\subsection{Accuracy of PrivSTRAT}

\begin{figure*}
\centering
 \begin{subfigure}[A]{0.48\textwidth}
              \includegraphics[width=\textwidth]{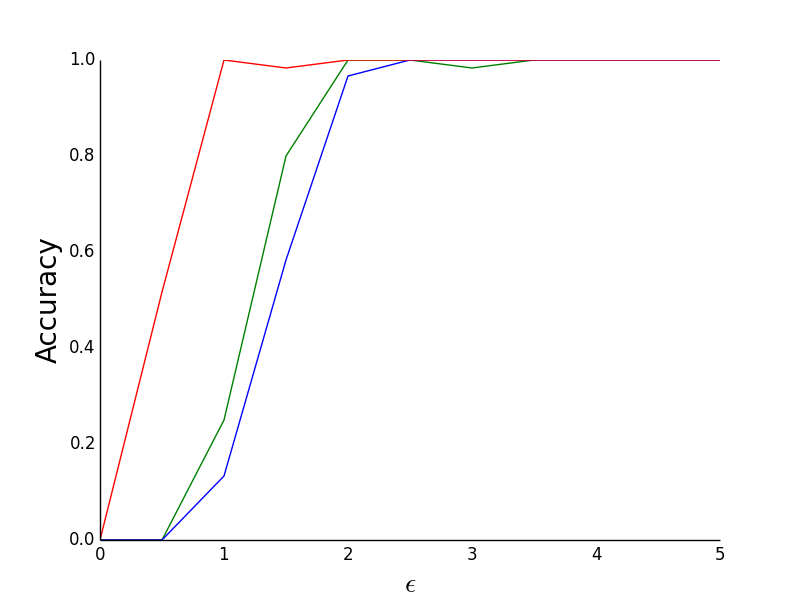}
		\caption{$m_{ret}=3$, Real GWAS}
        \end{subfigure}
        \begin{subfigure}[B]{0.48\textwidth}
               \includegraphics[width=\textwidth]{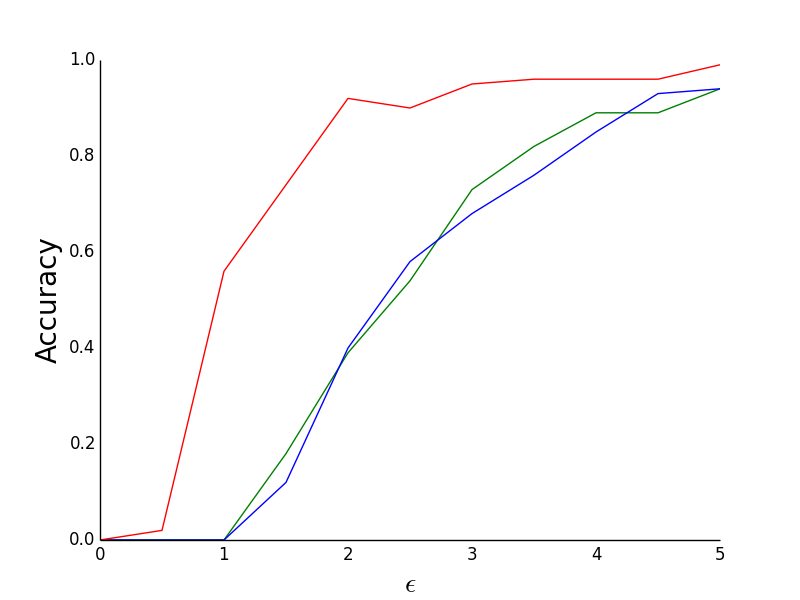}
		\caption{$m_{ret}=5$, Real GWAS}
        \end{subfigure}
       
        \begin{subfigure}[A]{0.48\textwidth}
              \includegraphics[width=\textwidth]{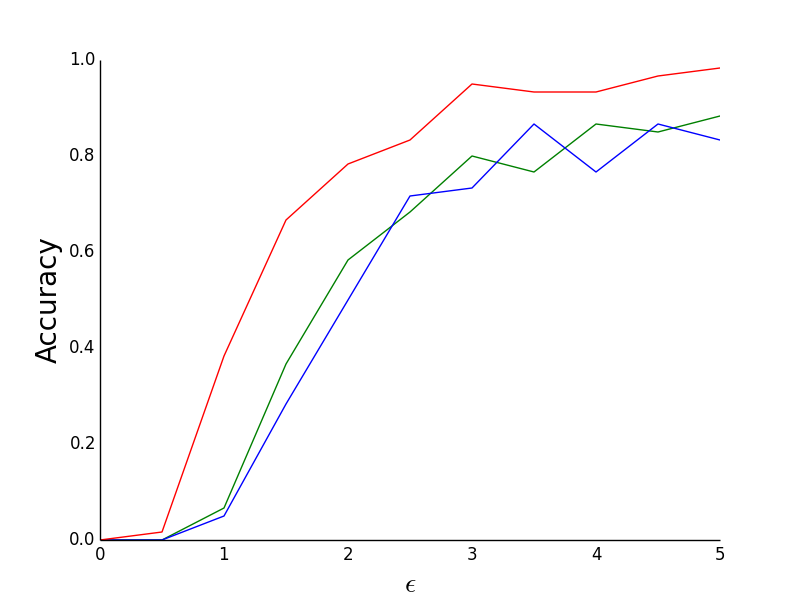}
		\caption{$m_{ret}=3$, Simulated GWAS}
        \end{subfigure}
        \begin{subfigure}[B]{0.48\textwidth}
               \includegraphics[width=\textwidth]{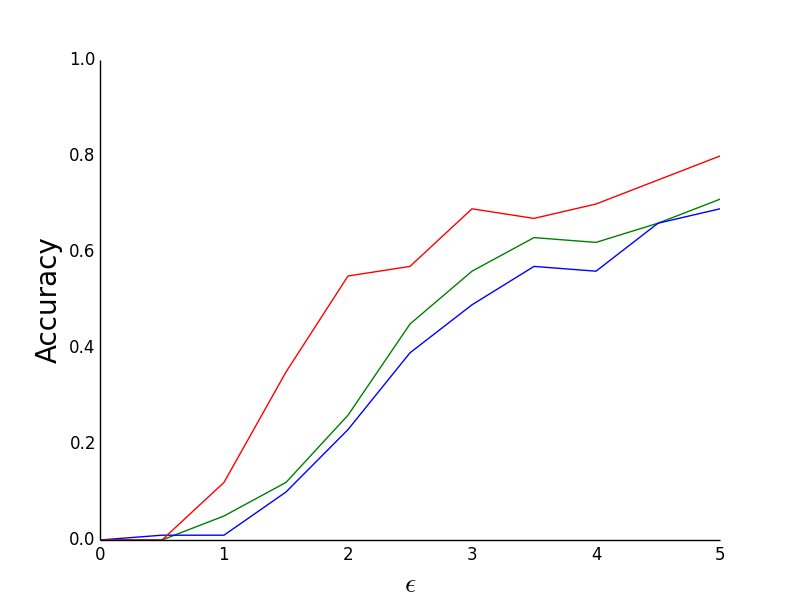}
		\caption{$m_{ret}=5$, Simulated GWAS}
        \end{subfigure}
\caption{We measure the accuracy (the percentage of the top SNPs correctly returned) of the three PrivSTRAT algorithms for picking top SNPs using score (blue), distance (red) and noise (green) based methods with $m_{ret}$ (the number of SNPs being returned) equal to (a) 3 and (b) 5  for our RA GWAS dataset, and with $m_{ret}$ equal to (c) 3 and (d) 5 for our simulated dataset, with varying values of the privacy parameter $\epsilon$. We see that, in all four cases, the distance based method outperforms the others.  These results are averaged over 20 iterations.}\label{Top}
\end{figure*}

We tested the accuracy of PrivSTRAT for picking high scoring SNPs. We tested each algorithm (noise, score, and distance based) for returning the top $m_{ret}$ SNPs, where $m_{ret}\in\{3,5\}$ (this choice is based off previous work \cite{CSAc2013}. Other values are explored in the Appendix.), and for various choices of the privacy parameter, $\epsilon$. The accuracy of the returned results (averaged over 20 trials) is measured by the percentage overlap between the returned results and the true results \cite{CSAc2013}. The results on the RA dataset are pictured in Fig \ref{Top}a and \ref{Top}b, while the results on the simulated data are shown in Fig \ref{Top}c and \ref{Top}d. We see that, as expected, as $\epsilon$ increases, accuracy increases. Moreover, we see that the noise and score based methods (in blue and green) do not perform as well as the distance based method (in red). This is not surprising, agreeing with previous work (our inclusion of the score and noise based methods is for the sake completeness). Moreover, on the real GWAS data we get near perfect accuracy for realistic values of $\epsilon$ (values around 1 or 2) \cite{Count}, accuracy that will increase as datasets grow.

\subsection{Runtime}

Though the privacy preserving methods add extra runtime to our method, the runtime is less than that required by EIGENSTRAT to find the top PCs. 

Note that, as in EIGENSTRAT, PrivSTRAT calculates the top PCs by performing singular value decomposition (SVD) on the normalized genotype matrix $X$. Note that our current implementation of PrivMAF uses a fast, approximate method for performing this SVD decomposition (details are in the appendix). This differs from the standard smartpca algorithm used by EIGENSTRAT (note that the newest version of EIGENSTRAT has also implemented a fast approximation similar to the one we use)

Therefore, in order to look at how the privacy preserving nature of PrivSTRAT affects runtime we looked at the runtime of PrivSTRAT using both the exact and approximate methods for calculating the SVD (Figure \ref{runtime}). More specifically, we ran PrivSTRAT on the RA dataset described above with $m_{ret}=3$, and looked at the amount of time taken by each step of the algorithm: calculating the SVD (using either exact (the smartpca algorithm included in EIGENSTRAT) or approximate methods), calculating the neighbor distance, and picking the SNPs. The results are an average over 10 trials. We see that the calculation of the exact SVD is (by far) the slowest of these steps, while even the approximate SVD calculation is only a factor of 2 faster than the slowest step in the privacy preserving algorithm.

Asymptotically, we see that the calculation of the exact neighbor distance is by far the most time consuming (running in time $O(n^2m)$), followed by the calculation of the neighbor distance ($O(nmlog(n))$) which is slightly slower than the approximate SVD calculation ($O(nmlog(n))$).

\begin{table}[!t]
\begin{center}
\caption{We compare the runtime of different steps of picking high scoring SNPs in a privacy preserving manner (measure in seconds). The runtime is calculated on the RA dataset, with $m_{ret}=3$, $k=5$, and averaged over 10 runs. We see that the smartpca method (the exact SVD method used in EIGENSTRAT, see text for details) takes the most time, followed by the calculation of neighbor distance. The approximate SVD method we use takes about half the time of calculating the neighbor distance.}\label{runtime}. 
\begin{tabular}{|l|l|l|l|l|}
\hline
Approx SVD & Exact SVD & Calculate $\mu$ &  Neighbor Distance & Pick SNPs \\\hline
14.37 seconds & 134.16 seconds & 8.60 seconds & 26.23 seconds & .25 seconds\\\hline
\end{tabular}
\end{center}
\end{table}

\subsection{Accuracy of PrivLMM}

\begin{figure*}
\centering
 \begin{subfigure}[A]{0.48\textwidth}
              \includegraphics[width=\textwidth]{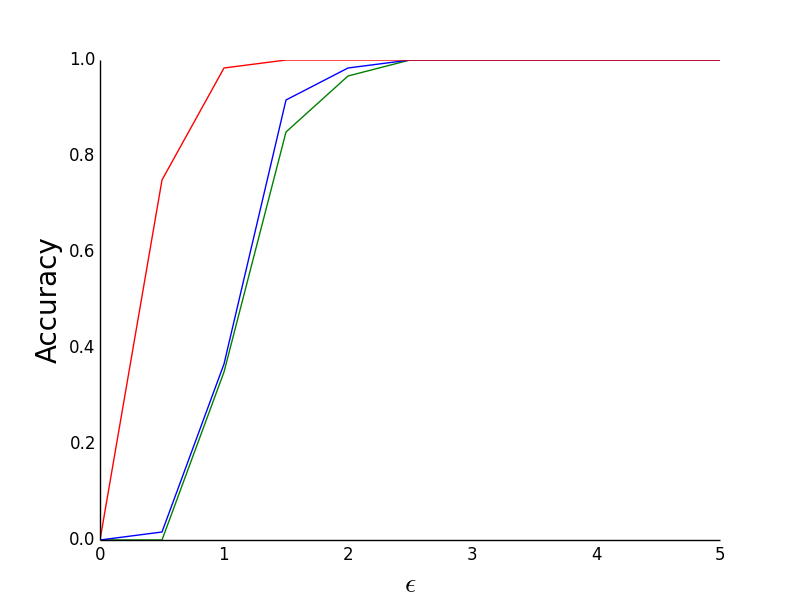}
		\caption{$m_{ret}=3$, Real GWAS}
        \end{subfigure}
        \begin{subfigure}[B]{0.48\textwidth}
               \includegraphics[width=\textwidth]{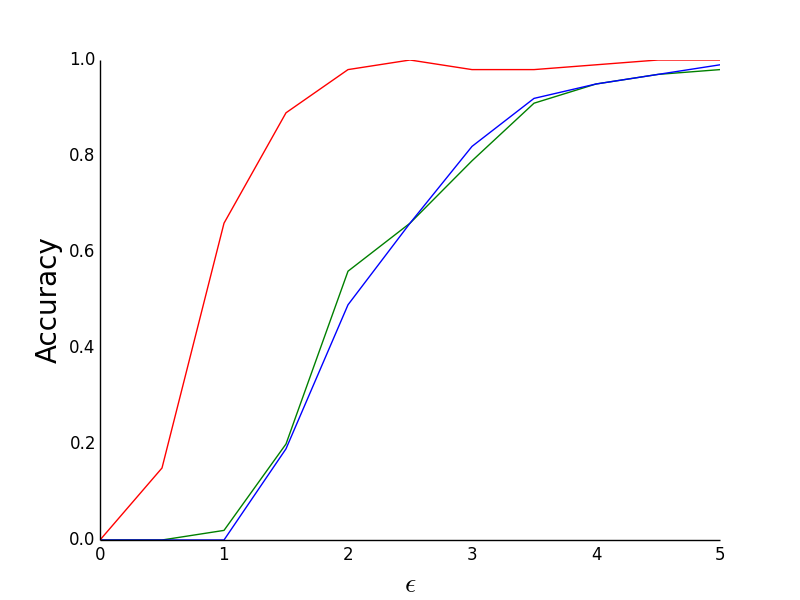}
		\caption{$m_{ret}=5$, Real GWAS}
        \end{subfigure}

\caption{We measure the accuracy (the percentage of the top SNPs correctly returned) of the three PrivLMM algorithms for picking top SNPs using score (blue), distance (red) and noise (green) based methods with $m_{ret}$ (the number of SNPs being returned) equal to (a) 3 and (b) 5  for our RA GWAS dataset, with varying values of the privacy parameter $\epsilon$. We see that, in both cases, the distance based method outperforms the others.  These results are averaged over 20 iterations.}\label{fake}
\end{figure*}

We also tested the accuracy of PrivLMM on our RA GWAS dataset (we do not include the simulated dataset due to space constraints). We used the same set up as for PrivSTRAT. The results are pictured in Fig \ref{fake}. We see that, as expected, as $\epsilon$ increases, accuracy increases, and that the noise and score based methods (in blue and green) do not perform as well as the distance based method (in red). Note that we used values of $\sigma_e$ and $\sigma_g$ calculated using FaST-LMM \cite{FASTLMM} software. In theory, it is preferable to use a differentially private approach to calculate $\sigma_e$ and $\sigma_g$. A method to do this, based on previous work \cite{LMMDP}, is given in the Appendix.

\section{Conclusion}

We have demonstrated we can perform privacy-preserving GWAS while correcting for the effects of population stratification without significant increase in running time. Note that the major computational bottleneck in our methods comes not from the privacy preserving component, but instead arises from the original statistics (from calculating the eigenvectors in EIGENSTRAT or inverting the matrix in the LMM based statistic). As such, our methods can exploit future computational advances in GWAS analysis. In particular, we are interested in modifying our method to take advantage of the computational advances introduced by Loh et al. \cite{PoRu} for LMM based association.

We would also like to extend our methods to settings where stronger privacy guarantees (beyond just protecting private phenotype data) are required. Another potential direction involves recent work showing that background knowledge about haplotypes \cite{blind} and population genetics \cite{JPS} can improve accuracy in privacy preserving genomic analysis. It would be of great interest to see if these approaches can be used to improve the accuracy of PrivSTRAT and PrivLMM.

In addition to improving privacy, recent theoretical work has shown that differential privacy can help prevent false positives due to overfitting in adaptive data analysis (looking at the data to decide which analysis techniques to use), overcoming a major problem in medical research \cite{overfit}. This line of inquiry opens up exciting possibilities for how our results might be used in the future.

Note that we are not advocating privacy preserving methods for all situations in which one might want to conduct a GWAS, but only when privacy concerns would make alternative approaches cumbersome or impossible.

It is our hope that our Priv suite of tools will be used to open up private genomics data to a much larger group of researchers. This access will give researchers new tools that can be used to produce novel hypotheses or validate old results in ways that are not currently possible due to privacy concerns. 

Availability: An implementation of our results and simulated data is available on our website, http://groups.csail.mit.edu/cb/PrivGWAS.

\newpage

\bibliographystyle{plain}
\bibliography{main}

\newpage

\section{Appendix}

\renewcommand{\thefigure}{S\arabic{figure}}

\setcounter{figure}{0}

\subsection{Proofs of correctness}

\begin{theorem}
The modified versions of the score and noise based methods for picking high scoring SNPs given in the manuscript are $\epsilon$-phenotypic differentially private.
\end{theorem}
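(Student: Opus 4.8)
The plan is to prove each mechanism satisfies $\epsilon$-phenotypic differential privacy by reducing it to the two standard building blocks of differential privacy: the Laplace mechanism (for the noise based method) and the exponential mechanism (for the score based method). In both cases the key quantity is the sensitivity of the underlying score $|\mu_i \cdot y|$ to changing one coordinate of $y$, and the crux is verifying that the global scaling constant $\Delta$ defined in the manuscript correctly bounds this sensitivity across all $m_{ret}$ returned SNPs simultaneously.

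**First I would** handle the noise based method. Here the output is the set of $m_{ret}$ SNPs with the largest $s_i = |\mu_i \cdot y| + \mathrm{Lap}(0, 2\Delta/\epsilon)$. The standard route is to view this as a post-processing of the noisy score vector $(s_1,\ldots,s_m)$: since selecting the top $m_{ret}$ indices is a deterministic function of that vector, by the post-processing invariance of differential privacy it suffices to show the map $y \mapsto (s_1,\ldots,s_m)$ is $\epsilon$-phenotypic differentially private. For a single coordinate flip of $y$, the change in $|\mu_i \cdot y|$ is at most $|\mu_{ij}|$ by the triangle inequality. The definition of $\Delta$ takes a maximum over individuals $j$ and over size-$m_{ret}$ index sets $S$ of $\sum_{i \in S}|\mu_{ij}|$, so it bounds the total $\ell_1$-sensitivity of the vector of scores restricted to any $m_{ret}$ coordinates. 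I would then invoke the Laplace mechanism guarantee — adding $\mathrm{Lap}(0, 2\Delta/\epsilon)$ noise to a statistic of $\ell_1$-sensitivity $\Delta$ yields $(\epsilon/2)$-privacy, and the factor matches — to conclude the claim. The subtlety worth flagging is why $\Delta$ uses an $m_{ret}$-sized sum rather than the full vector: because only the relative ordering among the top $m_{ret}$ scores matters for the output, one need only control sensitivity over the relevant coordinates.

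**Next I would** treat the score based method, which selects $m_{ret}$ SNPs without replacement, each round picking SNP $i$ with probability proportional to $\exp(\epsilon |\mu_i \cdot y| / (2\Delta))$. This is precisely the exponential mechanism with quality score $q_i(y) = |\mu_i \cdot y|$ and privacy budget tuned to the sensitivity of $q$. The plan is to first establish $\epsilon$-privacy for a single draw using the standard exponential-mechanism analysis: the ratio of selection probabilities under $y$ versus $y'$ is bounded by $\exp(\epsilon \cdot \Delta_q / (2\Delta)) \cdot \exp(\epsilon \cdot \Delta_q / (2\Delta))$ where $\Delta_q$ is the per-coordinate sensitivity of the quality score, and one checks this is at most $\exp(\epsilon)$. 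For the full without-replacement procedure, I would apply sequential composition, being careful that the privacy budget per round has been scaled so that the $m_{ret}$ rounds compose to the target $\epsilon$ — or, more cleanly, argue that the joint probability of any ordered selection sequence is a ratio of products of exponential terms whose total exponent is controlled by $\Delta$, avoiding a naive composition blowup.

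**The hard part will be** pinning down exactly how the sensitivity bound interacts with the $m_{ret}$-fold selection in the score method, since the without-replacement structure means the normalizing constant changes from round to round as SNPs are removed. I expect the cleanest argument avoids per-round composition and instead directly bounds the ratio $P(\text{output sequence} \mid y) / P(\text{output sequence} \mid y')$ by writing out the telescoping product of conditional probabilities and showing the denominators only help (a larger quality score at the chosen SNP under $y$ is offset by larger scores in the normalization), so that the $\Delta$ in the definition absorbs the total sensitivity across all $m_{ret}$ picks. The noise method is comparatively routine once post-processing and the sensitivity bound are in place; the score method's composition accounting is where I would spend the most care.
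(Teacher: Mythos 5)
Your sensitivity calculation is the one ingredient that matches the paper: flipping one coordinate $j$ of $y$ moves each $|\mu_i\cdot y|$ by at most $|\mu_{ij}|$, so the total change over any $m_{ret}$ SNPs is at most $\Delta$. But both of your reductions have genuine gaps. For the noise method, the post-processing route fails. You propose to show that the map $y\mapsto(s_1,\ldots,s_m)$ --- all $m$ noisy scores --- is $\epsilon$-private and then treat top-$m_{ret}$ selection as post-processing. Releasing the full noisy score vector at scale $2\Delta/\epsilon$ is \emph{not} $\epsilon$-private: its $\ell_1$-sensitivity is $\sum_{i=1}^m|\mu_{ij}|$, a sum over all $m$ SNPs, which can vastly exceed $\Delta$ (a sum over only $m_{ret}$ of them). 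Your parenthetical ``only the relative ordering among the top $m_{ret}$ scores matters'' is precisely the statement that needs proof, and it is not a consequence of post-processing; privacy of noisy selection holds only because the noisy scores themselves are never released, and it must be argued on the selection event itself, report-noisy-max style. That selection-level argument is what the prior work the paper defers to actually supplies, and it is the step your outline replaces with a false lemma.

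For the score method, you miss the device that makes the paper's proof essentially one line: define the quality function on \emph{entire ordered sets}, $q(s_1,\ldots,s_{m_{ret}},y)=\sum_{i=1}^{m_{ret}}|\mu_{s_i}\cdot y|$. By the sensitivity bound above, this $q$ has sensitivity at most $\Delta$, so the selection is a single invocation of the exponential mechanism over the space of ordered $m_{ret}$-sets --- no composition and no per-round accounting at all. Your per-round alternative does not close with the stated scaling: each round is an exponential mechanism whose per-SNP quality has sensitivity up to $\max_i|\mu_{ij}|$, so $m_{ret}$ rounds compose to roughly $\epsilon\, m_{ret}\max_i|\mu_{ij}|/\Delta$, which strictly exceeds $\epsilon$ whenever the $m_{ret}$ largest values of $|\mu_{ij}|$ are not all equal. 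Your fallback hope that in the telescoping product ``the denominators only help'' is also false in the worst case: if a high-scoring SNP \emph{not} in the output sequence carries most of $|\mu_{\cdot j}|$ and dominates every round's normalizer, then under the flipped $y'$ every normalizer shrinks by a factor near $e^{-\epsilon/2}$, and the normalizer ratios compound to roughly $e^{m_{ret}\epsilon/2}$. So the composition accounting you flagged as the hard part is not merely delicate --- with this $\Delta$ it is the wrong route, and the joint quality function over whole output sets is the missing idea.
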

\begin{proof}
The proofs are similar to those given in previous works \cite{CSAc2013}, except we use a score function where the score of returning SNPs $s_1,\cdots,s_{m_{ret}}$ equals $\displaystyle\sum_{i=1}^{m_{ret}}|\mu_{s_i}\cdot y|$. For completeness we give the details below.

To see that this is true for the score method, let $\mathbb{S}$ be the collection of all ordered sets of exactly $m_{ret}$ SNPs. Define the score function 
\[q:\mathbb{S}\times \{0,1\}^n\rightarrow \mathbb{R}\]
so that
\[q(s_1,\cdots,s_{m_{ret}},y)=\displaystyle\sum_{i=1}^{m_{ret}}|\mu_{s_i}\cdot y|\]
Note that, if $y,y'\in \{0,1\}^n$ differ in exactly one coordinate, then for any $s_1,\cdots,s_{m_{ret}}$ we have that:

\[|q(s_1,\cdots,s_{m_{ret}},y)-q(s_1,\cdots,s_{m_{ret}},y')|\leq \sum_{i=1}^{m_{ret}}|\mu_{s_i}\cdot (y-y')|\]
\[\leq \Delta\]

where $\Delta$ is defined as in the text. Therefore the result follows from the properties of the exponential mechanism \cite{ExpMech}.

Next consider the noise method. Again, using the fact that 
\[|q(s_1,\cdots,s_{m_{ret}},y)-q(s_1,\cdots,s_{m_{ret}},y')|\leq \sum_{i=1}^{m_{ret}}|\mu_{s_i}\cdot (y-y')|\]
\[\leq \Delta\]
the result follows from \cite{CSAc2013}.

\end{proof}

\begin{theorem}
Algorithm \ref{Samp2} returns the correct value of $d_i(c)$.
\end{theorem}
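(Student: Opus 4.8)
The plan is to recast $d_i(c)$ as a minimum-support problem under a single linear constraint and solve that problem greedily. Writing $y'=y+e$, the constraint $\mu_i\cdot y'=c$ becomes $\mu_i\cdot e=\delta$ with $\delta=c-\mu_i\cdot y$, the box constraint $y'\in[0,1]^n$ becomes $e_j\in[-y_j,1-y_j]$, and $|y-y'|_0=|e|_0$ counts the coordinates we actually modify. So $d_i(c)$ is the least number of coordinates $j$ we may ``unfreeze'' (permit $e_j\neq 0$) so that the single equation $\sum_j\mu_{ij}e_j=\delta$ is satisfiable inside the box.

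First I would record the per-coordinate reachable set: as $e_j$ ranges over $[-y_j,1-y_j]$, the contribution $\mu_{ij}e_j$ sweeps out the closed interval whose endpoints are $\mu_{ij}(1-y_j)$ and $\mu_{ij}(-y_j)$, and every intermediate value is attained since $e_j\mapsto\mu_{ij}e_j$ is continuous. By definition these endpoints are exactly $\hat{u}_j$ (the larger) and $\hat{l}_j$ (the smaller). Because $y_j\in\{0,1\}$ forces one endpoint to equal $0$, we always have $\hat{l}_j\le 0\le\hat{u}_j$; this straddling of $0$ is what makes the greedy step clean.

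The core step is a budgeted-reachability lemma: for a fixed unfrozen set $C$, the attainable values of $\sum_{j\in C}\mu_{ij}e_j$ form the interval $[\sum_{j\in C}\hat{l}_j,\sum_{j\in C}\hat{u}_j]$, since a sum of intervals is again an interval. Hence $\delta$ is reachable with a budget of $k$ changes iff some $C$ with $|C|=k$ has $\delta\in[\sum_{j\in C}\hat{l}_j,\sum_{j\in C}\hat{u}_j]$. A priori this seems to couple both endpoints, but the straddling property decouples them by sign: when $\delta\ge 0$ every set satisfies $\sum_{j\in C}\hat{l}_j\le 0\le\delta$ automatically, so feasibility reduces to $\max_{|C|=k}\sum_{j\in C}\hat{u}_j\ge\delta$, and this maximum is the top-$k$ partial sum $U_k$ produced by the algorithm's sort; symmetrically, when $\delta\le 0$ feasibility reduces to $L_k\le\delta$. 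Verifying this sign split, and that greedily taking the $k$ largest $\hat{u}_j$ (resp.\ smallest $\hat{l}_j$) really maximizes (resp.\ minimizes) the relevant endpoint, is the step I expect to be the main obstacle.

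Finally I would use monotonicity to read off the answer: the increments $u_j\ge 0$ are nonincreasing, so $U_k$ is nondecreasing from $U_0=0$, and symmetrically $L_k$ is nonincreasing from $L_0=0$; thus the feasible budgets form an up-set and $d_i(c)$ is the unique threshold index at which $\delta$ first enters $[L_k,U_k]$. The half-open interval membership test on the final line of Algorithm~\ref{Samp2} isolates exactly this threshold index (the intervals $(U_k,U_{k+1}]$ and $[L_{k+1},L_k)$ tile the positive and negative reachable ranges), so the returned value equals $d_i(c)$. I would close by disposing of the routine boundary cases---$\delta=0$ giving distance $0$, coordinates with $\mu_{ij}=0$ contributing $\hat{u}_j=\hat{l}_j=0$, and ties in the two sorts, none of which affect $U_k$ or $L_k$---and note that $b_i=\min\{d_i(c),d_i(-c)\}$ then follows at once.
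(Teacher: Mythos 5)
Your proposal is correct and takes essentially the same route as the paper's proof: both reduce to the equivalence $d_i(c)\le k \iff c-\mu_i\cdot y\in[L_k,U_k]$ via the sorted greedy partial sums and then read off $d_i(c)$ as the first index at which the target enters the nested intervals. If anything, yours is more complete---your straddling-zero/Minkowski-sum argument actually proves the converse direction that the paper dismisses as ``easy to see''---and the one blemish in your final paragraph (taking the algorithm's literal return line, $c\in[L_{k+1},L_k)\cup(U_k,U_{k+1}]$ with uncentered sums, to isolate the threshold) is an off-by-one/offset typo in Algorithm~\ref{Samp2} itself, which the paper's own proof also contradicts when it concludes $d_i(c)=k \iff c\in[L_k,L_{k-1})\cup(U_{k-1},U_k]$.
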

\begin{proof}

Let $U_k$, $L_k$, $l_k$ and $u_k$ be as in Algorithm \ref{Samp2}.

Assume that $y$ and $y'$ differ in at most $k$ coordinates, then

\[\mu_iy-\mu_iy'=\sum_{j,y_j\neq y_j'}\mu_{ij}(y_j-y_j')\leq -(l_1+\cdots+l_k)\]
so 
\[\mu_iy'\geq \mu_iy-\sum_{i=1}^kl_k=L_k\] 

Similarly

\[\mu_iy'\leq \mu_iy+\sum_{i=1}^ku_k=U_k\]

so if $d_i(c)\leq k$ than $L_k\leq c\leq U_k$. It is easy to see, however, that if $L_k\leq c\leq U_k$ than $d_i(c)\leq k$, so $d_i(c)=k$ if and only if $c\in [L_k,L_{k-1})\cup (U_{k-1},U_k]$. Therefore Algorithm \ref{Samp2} correctly calculates $d_i(c)$.

\end{proof}

\subsection{Details About the Distance Based Method}

Note that the distance based method for picking high scoring SNPs requires the choice of a boundary value, $c$. This value is a kind of baseline. Previous work, however, has shown that this arbitrary choice of $c$ can change the accuracy of the method \cite{CSAc2013}. 

In order to deal with this we use a slightly modified version of the distance based method. For a given $\epsilon$ and choice of $m_{ret}$, let $x_1,\cdots,x_m$ be a reordering of the list $|\mu_1 \cdot y|,\cdots,|\mu_m \cdot y|$ in decreasing order. Than we can choose $c$ so that

\[c=\frac{|x_{m_{ret}}|+|x_{m_{ret}+1}|}{2}+Lap(0.0,max_{i,j}\frac{|\mu_{i,j}|}{.1\epsilon})\]

We than run the distance based method with a privacy budget of $.9\epsilon$ and a boundary of $c$. This approach is still $\epsilon$-phenotypic differentially private, and removes some of the accuracy issues of previous approaches.

\subsection{Simulated dataset}

In order to produce simulated data, we used PLINK \cite{PLINK}. The code used to generate this data is available on our website. 

We generated two populations of individuals. For each set we first used plink to choose the MAF for 10000 SNPs, each uniformly at random from [.05,.5]. 9900 of the SNPs had no effect on phenotype, 100 had an odds ratio of 1.1. We then generated 5000 people from each of the populations, half of whom where cases, the other half controls. We then combined these two populations to produce our simulated dataset.

The code to do this is present online, as is the simulated data generated in this way.

\subsection{Estimating Heritability}

Another issue to consider is the estimation of $\sigma_e$ and $\sigma_g$ in PrivLMM. This, however, can be done using a sample-and-aggregate based framework \cite{LMMDP}. In particular, the works by choosing some integer $K>1$, and dividing the set of participants into $K$ disjoint sets of equal size. On each of these subsets we can estimate $h^2=\frac{\sigma_g^2}{\sigma_e^2+\sigma_g^2}$ using FaST-LMM \cite{FASTLMM}, GCTA \cite{GCTA} or a similar tool (our implementation uses FaST-LMM). This gives us $K$ estimates of $h^2$, namely $h_1^2,\ldots,h_K^2$. Let $\tilde{h}^2$ be the average of these $K$ values. Our $\epsilon$-differentially private estimate of $h^2$ is then given by calculating $\tilde{h}^2+Lap(0,\frac{1}{K\epsilon})$ and rounding the result to the interval $[0,1]$. 

Next we want to use the same framework to estimate $\sigma_e^2$. Note, however, that this would require a bound on $\sigma_e^2$. Note that $\sigma_e^2\leq Var(y)$, and that we can get a $\epsilon$-differentially private estimate $v_{dp}$ of $Var(y)$ easily using the laplacian mechanism. Then we can easily apply the sample-and-aggregate methodology to $\max\{v_{dp},\sigma_e^2\}$ to get an $\epsilon$-differentially private estimate.  Since $\sigma_g^2=\sigma_e^2(\frac{1}{1-h^2}-1)$ this allows us to get a $3\epsilon$-differentially private estimate of $(\sigma_e^2,\sigma_g^2)$. Note that this method relies on a very general methodology, and so it seems likely much more accurate results can be obtained with a little work. 

\subsection{Estimating $\chi^2$}

In addition to picking high scoring SNPs, we would like to estimate the associated $\chi^2$-statistic for EIGENSTRAT. In particular, assume we want to get an estimate of $\chi_i^2$ for a given SNP $i$. Note, however, that 
\[\chi_i^2=\frac{(n-k-1)(\mu_i \cdot y)^2}{|y^*|^2}\]
so it suffices to get estimates of both $\mu_i \cdot y$ and $|y^*|$ that are $\frac{\epsilon}{2}$-phenotypic differentially private and combine the results. This can be done easily, however, using the Laplacian mechanism \cite{CRr2013}, which gives us $\mu_i\cdot y+Lap(0,\frac{2max_{j}|\mu_{i,j}|}{\epsilon})$ and $|y^*|+Lap(0,\frac{2}{\epsilon})$ as estimates.

The approach taken by PrivLMM is almost identical, except there is no need to estimate $|y^*|^2$, only $\sigma_e$ and $\sigma_g$ (see above).

\subsection{Estimating The Number of Significant SNPs}\label{numsig}

We would also like to estimate the number of significant SNPs in a differentially private way for EIGENSTRAT--that is to say estimate the number of SNPs with $\chi_i^2>c$ for some user defined $c$ (often corresponding to a particular p-value cut off). This is equivalent to estimating the number of SNPs with $|\mu_iy|\geq |y^*|\sqrt{\frac{c}{n-k-1}}$. 

In order to do this we first calculate a $.1\epsilon$-phenotypic differentially private estimate of $|y^*|\sqrt{\frac{c}{n-k-1}}$, denoted $c_{dp}$, using the Laplacian mechanism:

\[c_{dp}=(|y^*|+Lap(0,\frac{10}{\epsilon}))\sqrt{\frac{c}{n-k-1}}\]

Since we know how to calculate $b_i(c_{dp})$ (see above) it is easy to apply the method of Johnson and Shmatikov \cite{UTDP} to get a $.9\epsilon$-phenotypic differentially private estimate of the number of SNPs with $|\mu_iy|\geq c_{dp}$, which is returned to the researcher. The result is an $\epsilon$-phenotypic differentially private estimate of the number of significant SNPs. Note that the choice of $.1\epsilon$ and $.9\epsilon$ are arbitrary, and can be played around with for better results.

A similar method works for PrivLMM.

\subsection{Accuracy of $\chi^2$}

\begin{figure}
\centering
 \begin{subfigure}[A]{0.48\textwidth}
              \includegraphics[width=\textwidth]{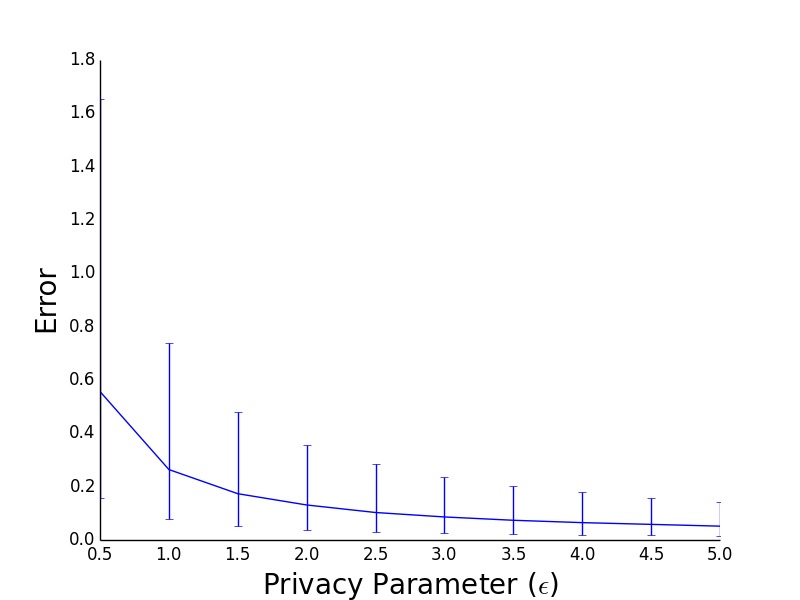}
		\caption{Real GWAS Data}
        \end{subfigure}
        \begin{subfigure}[B]{0.48\textwidth}
               \includegraphics[width=\textwidth]{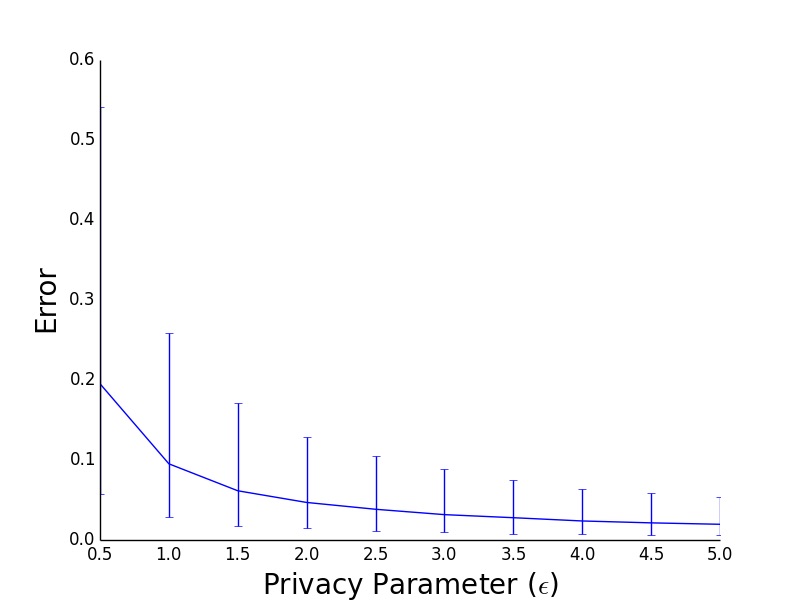}
		\caption{Simulated GWAS Data}
        \end{subfigure}
       
       \caption{Here we look at the accuracy of our mechanisms for approximating the EIGENSTRAT statistic on real (a) and simulated (b) GWAS data for various privacy parameters, $\epsilon$. We plot the median error over all SNPs, with error bars representing the 25\% and 75\% quantiles. As expected we see that accuracy increases as $\epsilon$ increases (aka as privacy decreases).}\label{chi}
\end{figure}

We also looked at the accuracy of PrivSTRAT's estimated $\chi^2$ value. Fig \ref{chi}(a) demonstrates our method on real GWAS data, Fig \ref{chi}(b) on simulated data. We plot the median error over all SNPs, with error bars representing the 25\% and 75\% quantiles. As expected our method increase in accuracy as $\epsilon$ increases.

\subsection{Difference From Standard GWAS}

The privacy preserving framework we introduce here is slightly different than that taken in standard GWAS. In particular, in standard GWAS the quantity $m_{ret}$ (the number of SNPs to be returned) is not known ahead of time. Instead, the user sets some p-value and gets back a list of all SNPs whose p-value is less than that boundary.

If one wants to perform such a study, they can use the method introduced above for calculating the number of significant SNPs (Section \ref{numsig}), and then use the returned value as $m_{ret}$. In order to ensure accuracy, however, it seems more reasonable to choose a small $m_{ret}$ ahead of time. This ensures the accuracy of the GWAS on the highest scoring SNPs, even if it comes at a cost to some SNPs near the p-value threshold of interest.

\subsection{Large Values of $m_{ret}$}

Our experiments show that, for small $m_{ret}$ ($m_{ret}\leq 5$ or so) that our methods are reasonably accurate. It turns out, however, that like previous approaches to differentially private GWAS, these methods do not always scale to large $m_{ret}$ when the number of individuals is small. This is shown for PrivSTRAT on the RA dataset in Fig \ref{largeM}. We see that, though accuracy is comparable to methods that do not correct for population stratification, it is still not as useful as we would like. Luckily, this accuracy should greatly increase as $n$ gets larger.

It is also worth asking if accuracy is the best measure of utility for our method. In particular, using accuracy to measure utility ignores the difference between returning SNPs that score almost as high as the top scoring SNPs versus returning low scoring SNPs. Moreover, GWAS assumes we are using SNPs to tag nearby regions of the genome. This implies that returning a SNP that is near a high scoring SNP can also be useful. Using accuracy as the measurement, however, ignores this as well. Therefore, in order to decide if our method is useful for larger $m_{ret}$ values, we should first decide exactly what makes a given result preferable to another.

\begin{figure}
\centering
 \begin{subfigure}[A]{0.48\textwidth}
              \includegraphics[width=\textwidth]{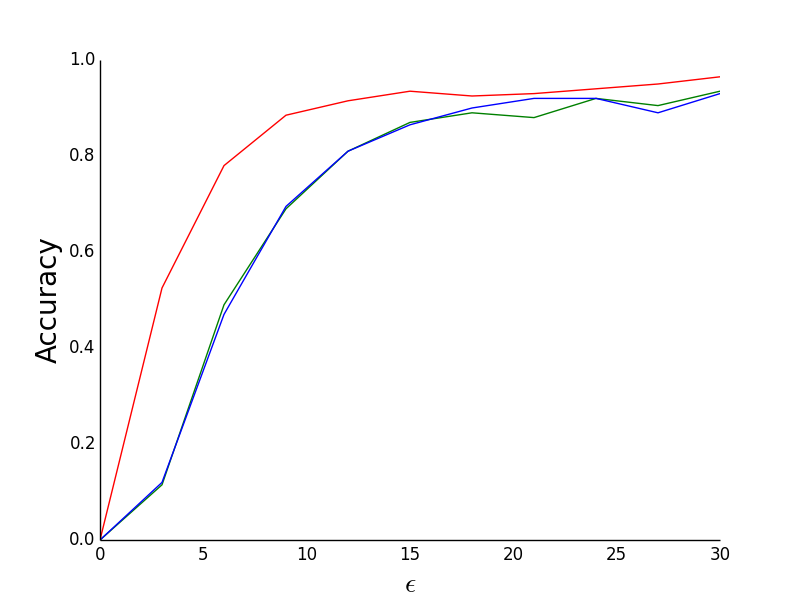}
		\caption{$m_{ret}=10$}
        \end{subfigure}
        \begin{subfigure}[B]{0.48\textwidth}
               \includegraphics[width=\textwidth]{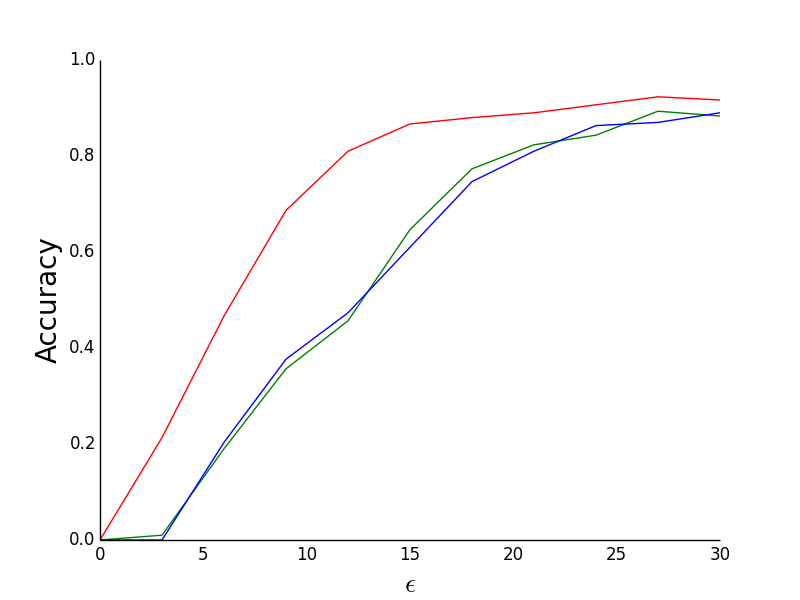}
		\caption{$m_{ret}=15$}
        \end{subfigure}
       
       \caption{We measure the accuracy (the percentage of the top SNPs correctly returned) of the three PrivSTRAT methods for picking top SNPs using score (blue), distance (red) and noise (green) based methods with $m_{ret}$ (the number of SNPs being returned) equal to (a) 10 and (b) 15  for the RA datasets, with varying values of the privacy parameter $\epsilon$. We see that, in both cases, the distance based method outperforms the others. Even still, the accuracy is fairly low for these large values of $m_{ret}$.  These results are averaged over 20 iterations.}\label{largeM}
\end{figure}

\subsection{Missing Genotype}

The above analysis assumed that there was no missing genotype data. In practice, however, many entries in a given genotype matrix will be undefined. There are various ways of dealing with this, most notably imputation. In this work we take a simpler approach (one that is built into the pysnptools package). This approach works by replacing each missing entry in the genotype vector at a given SNP with the mean value taken over all non-missing entries at that SNP. We plan for future versions of PrivSTRAT to make use of imputation based strategies for dealing with missing genotype data.

\subsection{Calculating PCA}

By default, PrivSTRAT uses an approximate version of SVD to perform the PCA in the paper, similar to that suggested in \cite{fastpca}. In particular, we use the TruncatedSVD command in sklearn.decomposition. This is due to the fact that calculating the PCA is by far the most time consuming step in the algorithm. One can, however, use an exact version of the SVD (by setting the -e flag to 1), which uses the SVD method in numpy.linalg.

\end{document}